\documentclass[11pt]{article}
\usepackage[T1]{fontenc}
\usepackage[utf8]{inputenc}
\usepackage{lmodern}
\usepackage{amsmath,amssymb,amsthm,mathtools}
\usepackage{graphicx,booktabs,microtype,float}
\usepackage[margin=1in]{geometry}
\usepackage{cite}
\usepackage[hidelinks]{hyperref}
\hypersetup{pdftitle={Extrinsic MANOVA for Oriented Projective Shapes},pdfauthor={Musab Alamoudi, Robert L. Paige, Victor Patrangenaru}}
\newcommand{\R}{\mathbb R}
\newcommand{\Ss}{\mathbb S}
\newcommand{\OPS}{\mathrm{OPS}}
\newcommand{\VW}{\mathrm{VW}}
\newcommand{\norm}[1]{\lVert #1\rVert}
\newcommand{\cov}{\operatorname{Cov}}
\newcommand{\diag}{\operatorname{diag}}
\newcommand{\col}{\operatorname{col}}
\newcommand{\tr}{\operatorname{tr}}
\newcommand{\sgn}{\operatorname{sign}}
\newcommand{\Imq}{\operatorname{Im}}
\newtheorem{theorem}{Theorem}[section]
\newtheorem{proposition}[theorem]{Proposition}
\newtheorem{corollary}[theorem]{Corollary}
\theoremstyle{remark}
\newtheorem{remark}[theorem]{Remark}
\numberwithin{equation}{section}
\title{Extrinsic MANOVA for Oriented Projective Shapes:\\ Covariance Structure, Nonparametric Bootstrap Inference, and Facial Data}
\author{Musab Alamoudi\thanks{Department of Statistics, Florida State University, Tallahassee, FL, USA.}
\and Robert L.\ Paige\thanks{Department of Mathematics and Statistics, Missouri University of Science and Technology, Rolla, MO, USA.}
\and Victor Patrangenaru\thanks{Department of Statistics, Florida State University, Tallahassee, FL, USA.}}
\date{October 2026}
\begin{document}
\maketitle
\begin{abstract}
For configurations of landmarks reconstructed from camera images, classical projective shape analysis identifies antipodal spherical representatives of projective coordinates, while oriented projective shape (OPS) retains their distinction. In 3D, a fixed oriented-projective coordinate representation of a OPS is point on product of 3D spheres. We develop nonparametric multisample inference for equality of population extrinsic means on this product. Quaternion multiplication helps create a global orthonormal tangent frame field, in which we obtain explicit extrinsic covariance blocks that preserve dependence between registered landmarks. A covariance-weighted MANOVA statistic has a chi-square null limit under nonfocality of distributions and positive-definite extrinsic covariance assumptions. An inverse-free discrepancy has a weighted chi-square limit when covariance matrices may be singular, and centered, groupwise nonparametric bootstrap procedures are justified asymptotically. We apply a fixed-observed-covariance nonparametric bootstrap version of the weighted statistic to 31 seven-landmark facial reconstructions from five subjects. Separate tests at the right inner brow and chin reject equality of the five OPS mean directions after Bonferroni adjustment. The analysis concerns a common mixed-sign frame domain and marginal landmarks; it does not establish pairwise subject differences or a joint six-dimensional facial test.
\end{abstract}
\par\smallskip
\noindent\textbf{Key words.} Oriented projective shape, extrinsic mean, quaternion, MANOVA, nonparametric bootstrap, facial landmark, fixed-sign domain.\\
\textbf{MSC2020}: 62R30, 62H35, 62H15
\section{Introduction}
Digital-camera reconstructions represent projective shapes of labeled landmark configurations, but projective registration and statistical comparison require a careful distinction between axes and directions. Classical projective shape (PS) identifies nonzero vectors that differ by arbitrary nonzero scale \cite{GoMa,MaPa:05}; oriented projective shape (OPS) identifies only projective shapes of landmark configuations seen in front of the camera. Thus, a pair of opposite unit vectors represents one PS axis but two OPS directions. The geometric formulation and examples of OPS were developed in \cite{ChZhHePa}; asymptotic inference for extrinsic means on manifolds is treated in \cite{BhPa:2005,PatElling}. Extrinsic total-variance methodology for oriented projective shapes was developed for planar images in \cite{AlPaPa} and for three-dimensional structure-from-motion reconstructions in \cite{AlPaPaSfM}; the present article addresses the comparison of extrinsic means across groups.

The objective of this article is to test equality of population \emph{extrinsic means of oriented projective shapes} across independent groups. The natural representation for three-dimensional configurations with $k$ landmarks and a specified signed registration is $(\Ss^3)^{k-5}$. That representation has two features useful for multisample inference: its inclusion in an ambient Euclidean space makes mean directions explicit, and its unit-quaternion structure provides a smooth global tangent basis. Dependence between different registered landmarks is retained in the covariance construction, rather than removed by assuming a block-diagonal covariance.

Projective mean-shape inference from digital-camera observations provides related background \cite{PaQiBu:2014}. We derive two distinct multisample procedures. First, a group-covariance-weighted quadratic form fits an unknown common tangent coordinate, yielding a chi-square limit when the required full covariance matrices are positive definite. Second, an inverse-free quadratic discrepancy can be used when sample covariance matrices cannot be inverted; its limiting distribution is a weighted sum of independent chi-square variables, not the first test's chi-square law. We also establish nonparametric bootstrap approximations and describe the fixed-observed-covariance implementation used in the facial application.

The application concerns five groups of independent facial reconstructions described in \cite[Section~12.5]{PaOs}. It tests the equality of means for the OPS coordinates of two landmark with respect to the oriented projective frame made by the other five labeled landmarks. We compare the resulting inference  with a parallel PS analysis based on the Veronese--Whitney (VW) embedding for the same configuration, and reports quaternion changes between estimated group means and the overall mean.

\section{Signed Projective Coordinates and Quaternion Geometry}
\subsection{A Common Coordinate Domain}
For a configuration of $k\ge6$ labeled points in $\R^3$, write $x_1,\ldots,x_k\in\R^4\setminus\{0\}$ for fixed positively scaled homogeneous lifts. Relabel the ordered five frame landmarks as $1,\ldots,5$ within this construction. Let
\begin{equation}\label{eq:frame}
 U=(x_1\ x_2\ x_3\ x_4),\qquad \boldsymbol{\alpha}=U^{-1}x_5,\qquad C=U\diag(\boldsymbol{\alpha}),\qquad
 z_s=\frac{C^{-1}x_s}{\norm{C^{-1}x_s}},\quad s=6,\ldots,k.
\end{equation}
We assume that the basis matrix is nonsingular and that every component $\alpha_j$ of $\boldsymbol{\alpha}$ is nonzero. Specify a domain by fixing the determinant sign $\delta=\sgn\det U$ and the coefficient-sign vector $\sigma=(\sgn\alpha_1,\ldots,\sgn\alpha_4)$. The frame coefficient vector $\boldsymbol{\alpha}$ is unrelated to the scalar significance level $\alpha$ used below. The positive-coefficient oriented frame in \cite[Proposition~3.3]{ChZhHePa} requires positive basis orientation and positive coefficients. Here we use a separately justified fixed-sign construction: mixed coefficient signs are permitted, but the prescribed signature must be the same throughout the statistical population under consideration. We do not assert that a mixed-sign frame satisfies the positive-coefficient definition of the cited paper.

The transformations acting on oriented rays form the group
\[
 \mathcal G_+=GL^+(4,\R)/\{cI_4:c>0\}
\]
of positive-determinant matrices modulo positive scalars. The quotient is by positive scalars only: a matrix and its negative are identified in the usual projective group, yet they act differently on oriented rays, and in the even dimension $4$ the determinant sign does not separate them.

\begin{proposition}[Fixed-sign spherical representation]\label{prop:fixed}
On a prescribed nondegenerate fixed-sign domain $(\delta,\sigma)$, the tuple $z=(z_6,\ldots,z_k)\in(\Ss^3)^{k-5}$ of \eqref{eq:frame} is invariant under the action of $\mathcal G_+$ and positive rescaling of every lift. Subject to no additional restrictions on the remaining landmarks, it determines the oriented configuration modulo these transformations, and gives a smooth identification of the fixed-sign quotient with $(\Ss^3)^{k-5}$.
\end{proposition}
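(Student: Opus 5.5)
I will argue directly from the construction \eqref{eq:frame} in three steps: invariance, injectivity (a canonical form), and smoothness with surjectivity.

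\emph{Invariance.} For positive rescaling, replace $x_j$ by $c_jx_j$ with $c_j>0$. Then $U\mapsto U\diag(c_1,\ldots,c_4)$, so $\boldsymbol\alpha\mapsto c_5\diag(c)^{-1}\boldsymbol\alpha$ and
\[
C\mapsto U\diag(c)\diag(c)^{-1}c_5\diag(\boldsymbol\alpha)=c_5C .
\]
Hence $C^{-1}x_s\mapsto (c_s/c_5)C^{-1}x_s$, and normalizing removes the positive factor. The signs $\delta$ and $\sigma$ are unchanged. For $A\in GL^+(4,\R)$, replacing $x_j$ by $Ax_j$ gives $U\mapsto AU$, leaves $\boldsymbol\alpha$ fixed, and sends $C\mapsto AC$, so $C^{-1}x_s$ is unchanged. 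Also $\sgn\det(AU)=\delta$, so the domain is preserved. A positive scalar matrix is a special case of rescaling, so the action of $\mathcal G_+$ is well defined.

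\emph{Injectivity.} Given the domain, use $A=C^{-1}$ (up to positive scalar) as a normalizing transformation; it lies in $GL^+$ only when $\det C>0$. Note that $\det C=\det U\prod\alpha_j$ has the fixed sign $\delta\prod\sigma_j$. If this sign is positive, $C^{-1}$ sends the frame to $x_j\mapsto \alpha_j^{-1}e_j$ for $j\le4$ and $x_5\mapsto \mathbf 1$. Rescaling positively gives the canonical frame $(\sigma_1e_1,\ldots,\sigma_4e_4;\mathbf 1)$ and $x_s\mapsto z_s$. If the sign is negative, compose with a fixed reflection $R$ (for instance $\diag(-1,1,1,1)$), giving a canonical frame $R(\sigma_je_j;\mathbf 1)$ and points $Rz_s$.

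In either case, two configurations in the same domain with equal $z$ are carried, by elements of $\mathcal G_+$ and positive rescalings, to the same canonical configuration. They are therefore equivalent. The phrase ``no additional restrictions'' means the $x_s$ are arbitrary nonzero vectors, so every $z\in(\Ss^3)^{k-5}$ is attained: take the canonical frame and $x_s=z_s$ (or $Rz_s$). This gives surjectivity.

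\emph{Smoothness.} The domain is an open subset of $(\R^4\setminus0)^k$ cut out by the sign conditions. On it the map is a composition of matrix inversion, products and normalization, so it is smooth and invariant. The map $z\mapsto(\text{canonical frame},z)$ is a smooth global section of this map. The quotient space itself is not assumed to carry a manifold structure beforehand; I therefore define its smooth structure through this pair, namely the invariant smooth map together with its smooth section. Under that structure the map descends to a diffeomorphism onto $(\Ss^3)^{k-5}$.

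\emph{Expected obstacle.} The main difficulty is the orientation bookkeeping in the injectivity step. One must check that the normalizing transformation can be chosen in $GL^+$ modulo positive scalars, and not merely in $GL$. This is why the sign of $\det C$ must be constant on the domain, and that constancy is exactly what fixing $(\delta,\sigma)$ provides. A further point is that the negative matrix $-I$ cannot be used to absorb a sign: it lies in $GL^+(4,\R)$, but it sends $z\mapsto -z$ while the rescaling relation only removes positive factors. So antipodal $z$ stay distinct, as OPS requires. I would verify that $-I$ does not change $C^{-1}x_s$, which holds because $C\mapsto -C$ as well.
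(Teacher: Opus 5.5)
Your argument is correct and is essentially the paper's. Normalizing each configuration by $C^{-1}$, or by $RC^{-1}$ when $\delta\prod_j\sigma_j=-1$, amounts to the paper's comparison map $H=C'C^{-1}$, whose positive determinant comes from the same fixed sign of $\det C$. Your canonical configuration likewise serves as the smooth representative that the paper builds from $U_0=\diag(\delta,1,1,1)$ and $\boldsymbol{\alpha}_0=\sigma$.

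The only slip is in your closing remark. There $-I$ does not send $z$ to $-z$: as you then check yourself, $C\mapsto -C$ leaves each $C^{-1}x_s$ unchanged. Antipodal tuples remain distinct simply because $z$ is a $\mathcal G_+$-invariant.
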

\begin{proof}
For $x_j'=c_jAx_j$ with $c_j>0$ and $\det A>0$, direct calculation gives $\alpha_j'=c_5\alpha_j/c_j$ and $C'=c_5 AC$. Consequently $(C')^{-1}x_s'=(c_s/c_5)C^{-1}x_s$, proving invariance of each signed unit coordinate and of the signature. Conversely, if two configurations have the same signature and all their signed coordinates agree, $H=C'C^{-1}$ has positive determinant because $\sgn\det C=\delta\prod_{j=1}^4\sigma_j$ is fixed. Moreover, $Hx_j=(\alpha_j'/\alpha_j)x_j'$ for $j\le4$, $Hx_5=x_5'$, and $Hx_s$ is a positive multiple of $x_s'$ for $s\ge6$. Thus the configurations lie in one $\mathcal G_+$-orbit. A smooth representative for any $(z_6,\ldots,z_k)$ is obtained with $U_0=\diag(\delta,1,1,1)$, $\boldsymbol{\alpha}_0=\sigma$, $x_5=U_0\sigma$, and $x_s=U_0\diag(\sigma)z_s$. Inversion and normalization are smooth on the nondegenerate domain.
\end{proof}

Put $q=k-5$ and $\mathcal M=(\Ss^3)^q$, whose factor $s=1,\ldots,q$ is the coordinate $z_{s+5}$, and let $j:\mathcal M\hookrightarrow\R^{4q}$ be inclusion. The projective frame in \eqref{eq:frame} is different from the tangent basis used below. An unspecified change of projective frame need not preserve extrinsic distances, even though it leaves the underlying projective orbit unchanged.

\subsection{Unit Quaternions and a Tangent Frame}
Represent a quaternion by a scalar-last vector $h=(\xi,t)\in\R^3\times\R$, with product $(\xi,t)\odot(\eta,u)=(t\eta+u\xi+\xi\times\eta,\ tu-\xi^\top\eta)$, conjugate $\bar h=(-\xi,t)$, and imaginary part $\Imq(h)=\xi$. Unit quaternions form $\Ss^3$, with inverse $h^{-1}=\bar h$, and right multiplication by a unit quaternion is an orthogonal linear transformation of $\R^4$. For $v=(x,y,z,t)^\top\in\Ss^3$, the images of the three imaginary basis vectors under right multiplication by $v$ give
\begin{equation}\label{eq:B}
 B(v)=\begin{pmatrix}
 t&z&-y\\ -z&t&x\\ y&-x&t\\ -x&-y&-z
 \end{pmatrix}\in\R^{4\times3},\qquad
 B(v)^\top v=0,\quad B(v)^\top B(v)=I_3,\quad B(v)B(v)^\top=I_4-vv^\top.
\end{equation}
Since right multiplication preserves inner products, the columns of $B(v)$ are an orthonormal basis of $T_v\Ss^3$; their dependence on $v$ is smooth. For $v=(v^1,\ldots,v^q)\in\mathcal M$, use the block diagonal $4q\times3q$ matrix
\[
 B(v)=\diag\bigl(B(v^1),\ldots,B(v^q)\bigr),
 \qquad \operatorname{tan}_v(h)=B(v)^\top h.
\]
The tangent coordinates admit the alternative quaternion form $B(v)^\top h=\Imq(h\odot v^{-1})$ in each factor.

\section{Extrinsic Means and Covariance Matrices}
For groups $a=1,\ldots,g$, let $Y_{a,1},\ldots,Y_{a,n_a}$ be i.i.d. random elements of $\mathcal M$ with group distribution $Q_a$; assume independent groups and $n_a/n\to\lambda_a\in(0,1)$, where $n=\sum_a n_a$ and $\sum_a\lambda_a=1$. Dependence among the $q$ coordinates within an observation is allowed. Let $Y_{a,i}^s\in\Ss^3$ denote the $s$th factor. Define the population ambient means, mean resultant lengths, and extrinsic means by
\begin{equation}\label{eq:popmeans}
 \mu_a^s=E(Y_{a,1}^s),\quad \rho_a^s=\norm{\mu_a^s}>0,\quad
 \mu_{a,E}^s=\mu_a^s/\rho_a^s,
 \qquad \mu_{a,E}=(\mu_{a,E}^1,\ldots,\mu_{a,E}^q).
\end{equation}
For the sample use $\bar Y_{a,n_a}^s=n_a^{-1}\sum_iY_{a,i}^s$, $R_{a,n_a}^s=\norm{\bar Y_{a,n_a}^s}$, and $\bar Y_{a,E,n_a}^s=\bar Y_{a,n_a}^s/R_{a,n_a}^s$ when the observed resultant is nonzero. Positivity of every $\rho_a^s$ is the nonfocality condition for extrinsic means on $\Ss^3$.

The differential of the normalization $P_j(x)=x/\norm{x}$ at $\mu_a^s$ is $(I_4-\mu_{a,E}^s(\mu_{a,E}^s)^\top)/\rho_a^s$. For cross-covariance blocks $\Sigma_a^{st}=\cov(Y_{a,1}^s,Y_{a,1}^t)\in\R^{4\times4}$, the extrinsic tangent covariance has the explicit blocks
\begin{equation}\label{eq:covblock}
 \Sigma_{a,E}^{st}=
 \frac{B(\mu_{a,E}^s)^\top\Sigma_a^{st}B(\mu_{a,E}^t)}{\rho_a^s\rho_a^t},
 \quad 1\le s,t\le q,
 \qquad \Sigma_{a,E}=[\Sigma_{a,E}^{st}]_{s,t=1}^q\in\R^{3q\times3q}.
\end{equation}
The off-diagonal blocks retain within-observation dependence. They are not between-group covariances.

Write $\Sigma_a=[\Sigma_a^{st}]_{s,t=1}^q$ for the full ambient covariance and set $A_a=\diag(A_a^1,\ldots,A_a^q)$, where $A_a^s=(I_4-\mu_{a,E}^s(\mu_{a,E}^s)^\top)/\rho_a^s$. Then $\Sigma_{a,\mu}=A_a\Sigma_aA_a^\top$ is the limiting covariance of $\sqrt{n_a}(\bar Y_{a,E,n_a}-\mu_{a,E})$, and $G_a(v)=B(v)^\top\Sigma_{a,\mu}B(v)$ is its expression in the tangent basis at $v$; in particular, $\Sigma_{a,E}=G_a(\mu_{a,E})$. For computation, set $S_{a,n_a}=n_a^{-1}\sum_i(Y_{a,i}-\bar Y_{a,n_a})(Y_{a,i}-\bar Y_{a,n_a})^\top$ and let $\widehat A_{a,n_a}$ be the block-diagonal differential evaluated at the group ambient sample mean, with blocks $(I_4-\bar Y_{a,E,n_a}^s(\bar Y_{a,E,n_a}^s)^\top)/R_{a,n_a}^s$. At a common tangent reference $v\in\mathcal M$, define
\begin{equation}\label{eq:G}
 G_{a,n_a}(v)=B(v)^\top\widehat A_{a,n_a}S_{a,n_a}
 \widehat A_{a,n_a}^\top B(v)\in\R^{3q\times3q}.
\end{equation}
The differential is estimated at the \emph{group} ambient mean, while the tangent basis is evaluated at the chosen \emph{common reference}; the two evaluation points play different roles.

\begin{proposition}[Extrinsic central limit theorem]\label{prop:clt}
Under the preceding sampling conditions and $\rho_a^s>0$ for every group and coordinate, the extrinsic sample means are strongly consistent, the scaled errors $\sqrt{n_a}(\bar Y_{a,E,n_a}-\mu_{a,E})$ have the normal limit $\mathcal N_{4q}(0,\Sigma_{a,\mu})$, and in the quaternion basis
\begin{equation}\label{eq:clt}
 \sqrt{n_a}\,B(\mu_{a,E})^\top(\bar Y_{a,E,n_a}-\mu_{a,E})
 \ \xrightarrow{d}\ \mathcal N_{3q}(0,\Sigma_{a,E}).
\end{equation}
The covariance estimates $G_{a,n_a}(v_n)$ converge in probability to $G_a(v)$ whenever $v_n\to v$ in probability.
\end{proposition}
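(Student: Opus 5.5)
The plan is the standard delta-method argument applied groupwise and then blockwise. First, since each $Y_{a,i}^s$ lies on $\Ss^3$, the ambient vectors $Y_{a,i}\in\R^{4q}$ are bounded. The strong law of large numbers therefore gives $\bar Y_{a,n_a}\to\mu_a$ almost surely. The multivariate central limit theorem gives $\sqrt{n_a}(\bar Y_{a,n_a}-\mu_a)\xrightarrow{d}\mathcal N_{4q}(0,\Sigma_a)$, with finite second moments automatic from boundedness.

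Next apply the projection map blockwise. Define $P(x)=(x^1/\norm{x^1},\ldots,x^q/\norm{x^q})$ on the open set where every block is nonzero. Nonfocality $\rho_a^s>0$ puts $\mu_a$ in this set. Almost sure convergence then implies that, eventually almost surely, every $R_{a,n_a}^s>0$. Continuity of $P$ yields $\bar Y_{a,E,n_a}=P(\bar Y_{a,n_a})\to P(\mu_a)=\mu_{a,E}$ almost surely, which is the strong consistency claim. Since $P$ is smooth near $\mu_a$, with differential the block-diagonal $A_a$ computed above, the delta method gives $\sqrt{n_a}(\bar Y_{a,E,n_a}-\mu_{a,E})\xrightarrow{d}\mathcal N_{4q}(0,A_a\Sigma_aA_a^\top)=\mathcal N_{4q}(0,\Sigma_{a,\mu})$. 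Multiplying by the fixed matrix $B(\mu_{a,E})^\top$ and applying the continuous mapping theorem gives a normal limit with covariance $B(\mu_{a,E})^\top A_a\Sigma_aA_a^\top B(\mu_{a,E})$.

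It remains to check that this covariance equals $\Sigma_{a,E}$ as defined in \eqref{eq:covblock}. By \eqref{eq:B}, $B(v)^\top v=0$, so $B(\mu_{a,E}^s)^\top A_a^s=B(\mu_{a,E}^s)^\top(I_4-\mu_{a,E}^s(\mu_{a,E}^s)^\top)/\rho_a^s=B(\mu_{a,E}^s)^\top/\rho_a^s$. Taking the $(s,t)$ block of the product gives exactly $B(\mu_{a,E}^s)^\top\Sigma_a^{st}B(\mu_{a,E}^t)/(\rho_a^s\rho_a^t)$. This also confirms $\Sigma_{a,E}=G_a(\mu_{a,E})$.

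For the covariance estimates, note that $S_{a,n_a}\to\Sigma_a$ almost surely, again by the strong law applied to the second moments $Y_{a,i}Y_{a,i}^\top$ combined with consistency of $\bar Y_{a,n_a}$. The estimated differential $\widehat A_{a,n_a}=DP(\bar Y_{a,n_a})$ converges almost surely to $A_a$, because $DP$ is continuous on the nonfocal open set. The map $v\mapsto B(v)$ is polynomial, hence continuous. Now suppose $v_n\to v$ in probability. The triple $(v_n,\widehat A_{a,n_a},S_{a,n_a})$ then converges in probability to $(v,A_a,\Sigma_a)$. The continuous mapping theorem applied to $(w,A,S)\mapsto B(w)^\top ASA^\top B(w)$ gives $G_{a,n_a}(v_n)\to G_a(v)$ in probability.

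The only delicate point is that $\bar Y_{a,E,n_a}$ is undefined when some observed resultant vanishes. I handle this by defining it arbitrarily on that event. The event has probability tending to zero, and indeed does not occur eventually almost surely, so it affects none of the limits. Everything else is routine.
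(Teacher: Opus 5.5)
Your proposal is correct and follows the same route as the paper. You use the strong law and CLT for the ambient means, then the continuous mapping theorem and delta method for the blockwise normalization. For the covariance estimates, you rely on consistency of $S_{a,n_a}$ and $\widehat A_{a,n_a}$ together with continuity of $B$. Beyond the paper's terse argument, you also verify via $B(v)^\top v=0$ that the block-diagonal derivative reproduces \eqref{eq:covblock}, which the paper only asserts.
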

\begin{proof}
Every $Y_{a,i}$ has fixed squared norm $q$, so all required moments exist. Apply the strong law and multivariate central limit theorem to the ambient sample means, followed by the continuous mapping theorem and multivariate delta method for blockwise normalization. The derivative is block diagonal, giving \eqref{eq:covblock}. Empirical covariance consistency and continuity of the derivative and $B$ give the assertion for \eqref{eq:G}; no invertibility is needed for this proposition.
\end{proof}

\section{A Covariance-Weighted OPS MANOVA Statistic}
The hypothesis of interest is
\begin{equation}\label{eq:null}
 H_0:\mu_{1,E}=\cdots=\mu_{g,E}=\mu_E,\qquad
 H_1:\text{at least two population extrinsic means differ}.
\end{equation}
The common value $\mu_E$ is unknown. Neither the population covariance matrices nor the entire group distributions are assumed equal. Normalize the weighted average of the group extrinsic sample means componentwise to obtain the pooled reference
\begin{equation}\label{eq:pool}
 \widehat\theta_n=\sum_{a=1}^g\frac{n_a}{n}\bar Y_{a,E,n_a},\qquad
 v_n=P_j(\widehat\theta_n),\qquad D_n=B(v_n),
\end{equation}
where $P_j$ normalizes each nonzero $\R^4$ block. The pooled reference is formed from the group \emph{extrinsic sample means}, rather than by pooling the original ambient observations before normalization. Under $H_0$, $v_n\to\mu_E$ in probability. Define the $3q$-dimensional discrepancies and covariance estimates
\begin{equation}\label{eq:disc}
 u_{a,n}=D_n^\top(\bar Y_{a,E,n_a}-v_n),\qquad
 G_{a,n}=G_{a,n_a}(v_n),\qquad W_{a,n}=n_aG_{a,n}^{-1}.
\end{equation}
The exact identity $\sum_an_au_{a,n}=0$ follows by taking the tangent projection of the pooled numerator. It does \emph{not} imply $\sum_aW_{a,n}u_{a,n}=0$ when covariances differ. Consequently, fitting the common tangent coordinate is essential:
\begin{equation}\label{eq:weighted}
 \widehat c_n=\left(\sum_aW_{a,n}\right)^{-1}\sum_aW_{a,n}u_{a,n},\qquad
 \boxed{T_n^{\OPS}=\sum_{a=1}^g(u_{a,n}-\widehat c_n)^\top W_{a,n}(u_{a,n}-\widehat c_n)}.
\end{equation}
Equivalently $T_n^{\OPS}=\min_{c\in\R^{3q}}\sum_a(u_{a,n}-c)^\top W_{a,n}(u_{a,n}-c)$. The test uses full $3q\times3q$ covariance matrices, including all off-diagonal blocks in \eqref{eq:covblock}.

\begin{theorem}[OPS MANOVA null distribution]\label{thm:wald}
Assume $H_0$ in \eqref{eq:null}, independent groups with limiting proportions $\lambda_a\in(0,1)$, positive population resultants, and positive definiteness of every full $\Sigma_{a,E}$. Then
\[
 T_n^{\OPS}\xrightarrow{d}\chi^2_{3q(g-1)}.
\]
Thus comparison with the upper chi-square $(1-\alpha)$ quantile gives a test of asymptotic level $\alpha$.
\end{theorem}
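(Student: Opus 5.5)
The plan is to reduce $T_n^{\OPS}$ to a weighted sum of squares of asymptotically independent Gaussian vectors, and then identify it as the squared norm of an orthogonal projection. Throughout, write $\mu=\mu_E$ for the common mean under $H_0$ and $D=B(\mu)$.

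\emph{Step 1: linearization.} By Proposition~\ref{prop:clt}, $\bar Y_{a,E,n_a}-\mu=O_p(n^{-1/2})$. Hence $\widehat\theta_n-\mu=O_p(n^{-1/2})$, and therefore $v_n-\mu=O_p(n^{-1/2})$, since $P_j$ is smooth near $\mu$ with $P_j(\mu)=\mu$. Because $B$ is smooth (indeed linear in $v$), $D_n=D+O_p(n^{-1/2})$. We then have
\[
\sqrt{n}\,u_{a,n}=\sqrt n\,D_n^\top(\bar Y_{a,E,n_a}-\mu)-\sqrt n\,D_n^\top(v_n-\mu).
\]
The first term equals $\sqrt n\,D^\top(\bar Y_{a,E,n_a}-\mu)+o_p(1)$. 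The second term does not depend on $a$; call it $\sqrt n\,\gamma_n$, with $\gamma_n=O_p(n^{-1/2})$. This common shift is exactly what the fitted coordinate $\widehat c_n$ absorbs, because $T_n^{\OPS}$ is invariant under replacing every $u_{a,n}$ by $u_{a,n}-c$ for any common $c$. So we may use $\tilde u_{a,n}=D_n^\top(\bar Y_{a,E,n_a}-\mu)$ in place of $u_{a,n}$, with no error at all. Set $Z_{a,n}=\sqrt{n_a}\,\tilde u_{a,n}$. By \eqref{eq:clt} and $D_n\to D$, $Z_{a,n}\xrightarrow{d}\mathcal N_{3q}(0,\Sigma_{a,E})$, and the $Z_{a,n}$ are independent across groups.

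\emph{Step 2: weights.} By Proposition~\ref{prop:clt} with $v_n\to\mu$, $G_{a,n}\to G_a(\mu)=\Sigma_{a,E}$ in probability. Since $\Sigma_{a,E}$ is positive definite, $G_{a,n}$ is invertible with probability tending to one, and $G_{a,n}^{-1}\to\Sigma_{a,E}^{-1}$. Write $W_{a,n}=n\,\Omega_{a,n}$ with $\Omega_{a,n}=(n_a/n)G_{a,n}^{-1}\to\lambda_a\Sigma_{a,E}^{-1}$. Then
\[
T_n^{\OPS}=\min_{c}\sum_a\bigl(Z_{a,n}-\sqrt{n_a}\,c\bigr)^\top G_{a,n}^{-1}\bigl(Z_{a,n}-\sqrt{n_a}\,c\bigr),
\]
which is an explicit continuous function of $(Z_{a,n},G_{a,n},n_a/n)$ once $\sqrt{n_a}\,c$ is reparametrized as $\sqrt{n_a/n}\,d$ with $d=\sqrt n\,c$. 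Note that $\lambda_a\in(0,1)$ is used here.

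\emph{Step 3: the limit law.} By Slutsky's theorem and the continuous mapping theorem, $T_n^{\OPS}\xrightarrow{d}\min_d\sum_a(Z_a-\sqrt{\lambda_a}\,d)^\top\Sigma_{a,E}^{-1}(Z_a-\sqrt{\lambda_a}\,d)$, where the $Z_a\sim\mathcal N(0,\Sigma_{a,E})$ are independent. Whiten by setting $\xi_a=\Sigma_{a,E}^{-1/2}Z_a$, which are i.i.d.\ $\mathcal N(0,I_{3q})$. Let $M$ be the $3qg\times 3q$ matrix with blocks $\sqrt{\lambda_a}\,\Sigma_{a,E}^{-1/2}$. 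The limit is then $\xi^\top(I-M(M^\top M)^{-1}M^\top)\xi$. Here $M^\top M=\sum_a\lambda_a\Sigma_{a,E}^{-1}$ is positive definite, so $M$ has full rank $3q$. The matrix in the quadratic form is an orthogonal projection of rank $3qg-3q=3q(g-1)$, and so the limit is $\chi^2_{3q(g-1)}$.

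The main obstacle is Step 1, specifically the claim that the estimation error from $v_n$ enters only through a common shift. The dependence of $D_n$ on $v_n$ must also be negligible. It is, because the product of $D_n-D=O_p(n^{-1/2})$ with $\bar Y_{a,E,n_a}-\mu=O_p(n^{-1/2})$ is $o_p(n^{-1/2})$. Rather than trusting that $\widehat c_n$ handles everything, I would verify this by an explicit expansion. The remaining care is to justify the continuity of the minimized form in its weights, which follows from its closed-form expression.
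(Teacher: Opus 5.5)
Your proof is correct and follows essentially the same route as the paper. You use exact translation invariance of the fitted quadratic form to discard the common reference error $D_n^\top(v_n-\mu_E)$, then the groupwise CLT with consistent, eventually invertible covariance estimates, then whitening into the squared norm of an orthogonal projection of rank $3q(g-1)$; your $M$ with blocks $\sqrt{\lambda_a}\,\Sigma_{a,E}^{-1/2}$ is exactly the paper's $V^{-1/2}J$. One small imprecision: the $Z_{a,n}$ share $D_n$, so they are not exactly independent for finite $n$. Writing $Z_{a,n}=D^\top\sqrt{n_a}(\bar Y_{a,E,n_a}-\mu_E)+o_p(1)$ gives the required joint convergence to independent normals.
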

\begin{proof}
Write $p=3q$ and use the population common frame $B_0=B(\mu_E)$. Translation of all discrepancy vectors by the same reference-error vector leaves their minimized quadratic form unchanged. Since $D_n-B_0=o_p(1)$, replacing the reference-based group discrepancies by $x_{a,n}=B_0^\top(\bar Y_{a,E,n_a}-\mu_E)$ changes the statistic by $o_p(1)$. The independent scaled group errors $\sqrt n\,x_{a,n}$ jointly converge to a normal vector with block covariance $V=\diag(\Sigma_{1,E}/\lambda_1,\ldots,\Sigma_{g,E}/\lambda_g)$. The block-diagonal estimator $\widehat V_n=\diag\bigl((n/n_a)G_{a,n}\bigr)_{a=1}^{g}$ satisfies $\widehat V_n\xrightarrow{p}V$, so the empirical precision matrix $\widehat V_n^{-1}$ converges to $V^{-1}$. Put $J=\col(I_p,\ldots,I_p)\in\R^{gp\times p}$. The limiting quadratic form is
\[
 \min_{c\in\R^p}(Z-Jc)^\top V^{-1}(Z-Jc),\qquad Z\sim\mathcal N_{gp}(0,V).
\]
After whitening, this is the squared norm of a standard normal vector projected orthogonally onto the complement of the $p$-dimensional column space of $V^{-1/2}J$. That projection has rank $gp-p=3q(g-1)$, proving the stated chi-square law. Consistency of the sample covariance inverses follows from the positive definite limits.
\end{proof}

\begin{corollary}[Two populations]\label{cor:two}
For $g=2$, whenever the covariance estimates are positive definite,
\[
 T_n^{\OPS}=(u_{1,n}-u_{2,n})^\top
 \left(\frac{G_{1,n}}{n_1}+\frac{G_{2,n}}{n_2}\right)^{-1}
 (u_{1,n}-u_{2,n}),
\]
and its limiting null law is $\chi^2_{3q}$ under Theorem~\ref{thm:wald}.
\end{corollary}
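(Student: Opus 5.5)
The plan is a direct algebraic evaluation of the minimization in \eqref{eq:weighted} for $g=2$, followed by an appeal to Theorem~\ref{thm:wald} for the limit. Write $W_a=W_{a,n}=n_aG_{a,n}^{-1}$ and $u_a=u_{a,n}$. Both $W_a$ are positive definite by assumption, so $W_1+W_2$ is invertible and $\widehat c_n=(W_1+W_2)^{-1}(W_1u_1+W_2u_2)$ is well defined.

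First I compute the residuals. We have $u_1-\widehat c_n=(W_1+W_2)^{-1}W_2(u_1-u_2)$ and $u_2-\widehat c_n=-(W_1+W_2)^{-1}W_1(u_1-u_2)$. Writing $d=u_1-u_2$ and $M=(W_1+W_2)^{-1}$, the statistic becomes
\[
T_n^{\OPS}=d^\top\bigl(W_2MW_1MW_2+W_1MW_2MW_1\bigr)d.
\]
Next I simplify the middle matrix using $W_1=(W_1+W_2)-W_2$. This gives $W_1MW_2=W_2-W_2MW_2$, and by symmetry this also equals $W_1-W_1MW_1$; in particular $W_1MW_2=W_2MW_1$. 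Substituting, the bracket equals $W_2MW_1M(W_1+W_2)=W_2MW_1$. Then the standard identity $W_2(W_1+W_2)^{-1}W_1=(W_1^{-1}+W_2^{-1})^{-1}$ applies. It is checked by inverting: $W_1^{-1}(W_1+W_2)W_2^{-1}=W_2^{-1}+W_1^{-1}$. Since $W_a^{-1}=G_{a,n}/n_a$, this yields exactly the displayed formula.

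For the limit, the corollary is the case $g=2$ of Theorem~\ref{thm:wald}, so $T_n^{\OPS}\xrightarrow{d}\chi^2_{3q(2-1)}=\chi^2_{3q}$. The only point needing care is that the theorem's hypotheses already guarantee $G_{a,n}\xrightarrow{p}\Sigma_{a,E}$, which is positive definite. So the event on which the estimates are invertible has probability tending to one, and the closed form agrees with $T_n^{\OPS}$ on that event. No step is a real obstacle; the only mildly delicate one is the matrix simplification $W_2MW_1MW_2+W_1MW_2MW_1=W_2MW_1$, and I would verify it with the symmetric identity above.
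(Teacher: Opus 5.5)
Your proposal is correct and matches the paper's implicit argument. The paper states the corollary without a separate proof, as the $g=2$ case of the fitted quadratic form in \eqref{eq:weighted} combined with Theorem~\ref{thm:wald}; your residual computation, the identity $W_2(W_1+W_2)^{-1}W_1=(W_1^{-1}+W_2^{-1})^{-1}$, and your observation that $G_{a,n}\xrightarrow{p}\Sigma_{a,E}$ under $H_0$ (so the closed form agrees with $T_n^{\OPS}$ with probability tending to one) supply exactly the omitted details.
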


\begin{remark}[Scope under fixed alternatives]\label{rem:antipodal}
Theorem~\ref{thm:wald} is a null-distribution statement; it does not make $T_n^{\OPS}$ consistent against every fixed alternative. Since $B(v)^\top v=0$, a component of a mean difference along the reference direction is invisible in tangent coordinates: two populations with antipodal extrinsic means, $\mu_{1,E}^s=-\mu_{2,E}^s$, have the same tangent coordinate at a reference along that axis, and with equal limiting proportions the pooled normalization itself fails on that block. The inverse-free statistic of Section~\ref{sec:direct} separates every fixed alternative without assuming that the means lie in a common hemisphere (Theorem~\ref{thm:direct}). In the facial application all observed quaternion changes between group means have positive real part (Section~\ref{sec:changes}), so the estimated mean directions are far from this configuration.
\end{remark}

\section{Singular Covariance Matrices and Nonparametric Bootstrap Inference}
\subsection{A Direct Inverse-Free Comparison}\label{sec:direct}
Each estimate $G_{a,n_a}(v)$ is the average of $n_a$ outer products of the vectors $B(v)^\top\widehat A_{a,n_a}(Y_{a,i}-\bar Y_{a,n_a})$, which sum to zero; hence
\begin{equation}\label{eq:rank}
 \operatorname{rank}G_{a,n_a}(v)\le\min(3q,n_a-1).
\end{equation}
If $n_a-1<3q$, the group covariance is structurally singular, even if the population covariance is positive definite. A separate statistic avoids covariance inversion. Define
\begin{equation}\label{eq:direct}
 Q_n^{\OPS}=\sum_an_a\norm{\bar Y_{a,E,n_a}-\widehat\theta_n}^2
 =\frac1n\sum_{a<b}n_an_b\norm{\bar Y_{a,E,n_a}-\bar Y_{b,E,n_b}}^2.
\end{equation}
It is a discrepancy among \emph{group extrinsic means}, not an ordinary within-group total-variance statistic.

\begin{theorem}[Inverse-free null limit]\label{thm:direct}
Assume positive population resultants and the independent-group sampling conditions. Under $H_0$, set $N=4q$, $\Sigma_{a,\mu}=A_a\Sigma_aA_a^\top$, $\Omega=\diag(\Sigma_{1,\mu},\ldots,\Sigma_{g,\mu})$, $w=(\sqrt{\lambda_1},\ldots,\sqrt{\lambda_g})^\top$, and $\mathcal L=(I_g-ww^\top)\otimes I_N$. If $\tr(\mathcal L\Omega)>0$, then
\begin{equation}\label{eq:directlaw}
 Q_n^{\OPS}\xrightarrow{d}\sum_{h=1}^{gN}\omega_hZ_h^2,
\end{equation}
where $Z_h$ are independent standard normal variables and $\omega_h\ge0$ are the eigenvalues of $\Omega^{1/2}\mathcal L\Omega^{1/2}$. Under a fixed alternative of unequal population extrinsic means, $Q_n^{\OPS}/n$ converges almost surely to a positive constant.
\end{theorem}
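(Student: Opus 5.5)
The proof rewrites $Q_n^{\OPS}$ as a quadratic form in the stacked, scaled group errors and applies Proposition~\ref{prop:clt} together with the continuous mapping theorem. Under $H_0$, set $e_{a,n}=\bar Y_{a,E,n_a}-\mu_E\in\R^{N}$. Because $\widehat\theta_n=\sum_a(n_a/n)\bar Y_{a,E,n_a}$ and the weights sum to one, the common mean cancels:
\[
\bar Y_{a,E,n_a}-\widehat\theta_n=e_{a,n}-\sum_b\tfrac{n_b}{n}e_{b,n}.
\]
Put $\zeta_{a,n}=\sqrt{n_a}\,e_{a,n}$ and $w_n=(\sqrt{n_1/n},\ldots,\sqrt{n_g/n})^\top$. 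Then
\[
\sqrt{n_a}\bigl(\bar Y_{a,E,n_a}-\widehat\theta_n\bigr)=\zeta_{a,n}-\sqrt{n_a/n}\sum_b\sqrt{n_b/n}\,\zeta_{b,n}.
\]
Stacking gives $Q_n^{\OPS}=\norm{\mathcal L_n\zeta_n}^2$ with $\mathcal L_n=(I_g-w_nw_n^\top)\otimes I_N$. Since $\norm{w_n}=1$, $\mathcal L_n$ is an orthogonal projection, so $Q_n^{\OPS}=\zeta_n^\top\mathcal L_n\zeta_n$ exactly.

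For the limit, the groups are independent and Proposition~\ref{prop:clt} gives $\zeta_{a,n}\xrightarrow{d}\mathcal N_N(0,\Sigma_{a,\mu})$. Hence $\zeta_n\xrightarrow{d}\zeta\sim\mathcal N_{gN}(0,\Omega)$. Because $n_a/n\to\lambda_a$, we have $w_n\to w$ and $\mathcal L_n\to\mathcal L$. Slutsky's lemma and the continuous mapping theorem then give $Q_n^{\OPS}\xrightarrow{d}\zeta^\top\mathcal L\zeta$. Write $\zeta=\Omega^{1/2}Z$ with $Z$ standard normal in $\R^{gN}$. Then $\zeta^\top\mathcal L\zeta=Z^\top\Omega^{1/2}\mathcal L\Omega^{1/2}Z$, and the spectral decomposition of this symmetric positive semidefinite matrix, together with rotation invariance of $Z$, yields $\sum_h\omega_hZ_h^2$ with $\omega_h\ge0$. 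The hypothesis $\tr(\mathcal L\Omega)=\sum_h\omega_h>0$ only guarantees that the limit is nondegenerate and not identically zero; it plays no role in the convergence itself. The point needing care is the singular $\Omega$: the factorization $\zeta=\Omega^{1/2}Z$ uses the symmetric square root and requires no invertibility, which is why no positive-definiteness assumption appears.

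Under a fixed alternative, the strong consistency part of Proposition~\ref{prop:clt} gives $\bar Y_{a,E,n_a}\to\mu_{a,E}$ almost surely. The pairwise form in \eqref{eq:direct} then gives
\[
Q_n^{\OPS}/n=\sum_{a<b}\frac{n_an_b}{n^2}\norm{\bar Y_{a,E,n_a}-\bar Y_{b,E,n_b}}^2\to\sum_{a<b}\lambda_a\lambda_b\norm{\mu_{a,E}-\mu_{b,E}}^2 .
\]
This limit is strictly positive because every $\lambda_a>0$ and at least one pair of means differs.

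The main obstacle is verifying the algebraic identity $Q_n^{\OPS}=\zeta_n^\top\mathcal L_n\zeta_n$, especially the cancellation of $\mu_E$ and the correct $\sqrt{n_a/n}$ scalings. Once that identity holds, the rest is routine.
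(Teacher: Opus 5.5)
Your proof is correct and follows the paper's argument: the exact identity $Q_n^{\OPS}=\zeta_n^\top\mathcal L_n\zeta_n$, joint Gaussian convergence from group independence and Proposition~\ref{prop:clt}, $\mathcal L_n\to\mathcal L$, the continuous mapping theorem, and diagonalization of $\Omega^{1/2}\mathcal L\Omega^{1/2}$. The one difference is cosmetic: under the alternative you use the pairwise limit $\sum_{a<b}\lambda_a\lambda_b\norm{\mu_{a,E}-\mu_{b,E}}^2$, which equals the paper's centered form $\sum_a\lambda_a\norm{\mu_{a,E}-\sum_b\lambda_b\mu_{b,E}}^2$ and makes positivity immediate.
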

\begin{proof}
Under $H_0$, $\epsilon_{a,n}=\sqrt{n_a}(\bar Y_{a,E,n_a}-\mu_E)$ converge jointly to centered independent normal columns with covariances $\Sigma_{a,\mu}$. Weighted centering shows that $Q_n^{\OPS}=\epsilon_n^\top\mathcal L_n\epsilon_n$ exactly, where $\mathcal L_n=(I_g-w_nw_n^\top)\otimes I_N$ and $w_n=(\sqrt{n_1/n},\ldots,\sqrt{n_g/n})^\top$. Since $\mathcal L_n\to\mathcal L$, the continuous mapping theorem and diagonalization yield \eqref{eq:directlaw}. Under an alternative, consistency of the extrinsic group means makes $Q_n^{\OPS}/n$ converge to $\sum_a\lambda_a\norm{\mu_{a,E}-\sum_b\lambda_b\mu_{b,E}}^2$, positive unless all means coincide.
\end{proof}

The null law in \eqref{eq:directlaw} is not generally $\chi^2_{3q(g-1)}$; its weights $\omega_h$ depend on the unknown covariances, so critical values are obtained from the bootstrap approximation below. The facial results do \emph{not} use $Q_n^{\OPS}$.

\subsection{Nonparametric Bootstrap for Independent Groups}
Following nonparametric bootstrap principles \cite{Efron:79,KoBaHaPa:2015}, resample \emph{complete} configurations with replacement independently within each group, retaining any empirical dependence among their coordinate landmarks. Let $\bar Y_{a,E,n_a}^*$ be the bootstrap group extrinsic mean and $e_a^*=\bar Y_{a,E,n_a}^*-\bar Y_{a,E,n_a}$. Centering the resampled group means at their own observed means removes the original between-group differences under the bootstrap null.

For the direct statistic, define $\overline e_n^*=\sum_a(n_a/n)e_a^*$ and
\[
 Q_n^{\OPS,*}=\sum_an_a\norm{e_a^*-\overline e_n^*}^2.
\]
For the covariance-weighted statistic, put $e_{a,T}^*=D_n^\top e_a^*$. A \emph{studentized bootstrap} recomputes the group covariance estimates in each resample and fits the common coordinate again. In the \emph{fixed-observed-covariance bootstrap} used in the facial application, keep $D_n$, $G_{a,n}$, and $W_{a,n}$ fixed and use
\begin{equation}\label{eq:fixedboot}
 \widehat c_{F,n}^*=\left(\sum_aW_{a,n}\right)^{-1}\sum_aW_{a,n}e_{a,T}^*,\qquad
 T_{F,n}^{\OPS,*}=\sum_a(e_{a,T}^*-\widehat c_{F,n}^*)^\top W_{a,n}(e_{a,T}^*-\widehat c_{F,n}^*).
\end{equation}
The fitted common coordinate is required in both the observed and resampled statistics.

\begin{proposition}[First-order bootstrap validity]\label{prop:boot}
Under $H_0$ and the nondegeneracy condition of Theorem~\ref{thm:direct}, the conditional distribution of $Q_n^{\OPS,*}$ converges in probability to the law in \eqref{eq:directlaw}. Under the stronger conditions of Theorem~\ref{thm:wald}, the conditional distribution of the centered studentized or fixed-observed-covariance statistic converges in probability to $\chi^2_{3q(g-1)}$. Thus their corresponding conditional upper quantiles give tests of asymptotic level $\alpha$.
\end{proposition}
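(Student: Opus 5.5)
The plan is to reduce both bootstrap statements to a conditional central limit theorem for the centered group errors. Then the same algebra used in the proofs of Theorems~\ref{thm:wald} and~\ref{thm:direct} is applied conditionally on the data.

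\emph{Conditional CLT for the ambient means.} Write $\bar Y_{a,n_a}^*$ for the bootstrap ambient group mean. Since $\norm{Y_{a,i}}^2=q$, all moments are bounded, so the classical bootstrap CLT for i.i.d.\ means (Bickel--Freedman, or a Lindeberg check on the empirical distribution) applies. It gives, conditionally and in probability,
\[
\sqrt{n_a}\bigl(\bar Y_{a,n_a}^*-\bar Y_{a,n_a}\bigr)\xrightarrow{d}\mathcal N_{4q}(0,\Sigma_a).
\]
Independence of the within-group resampling makes these limits jointly independent across $a$.

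\emph{Conditional delta method.} Next I transfer this to $e_a^*=P_j(\bar Y_{a,n_a}^*)-P_j(\bar Y_{a,n_a})$. The bootstrap strong law shows that $\bar Y_{a,n_a}^*$ and $\bar Y_{a,n_a}$ both lie, with conditional probability tending to one, in a small ball around $\mu_a$; on that ball each block norm stays bounded away from $0$ because $\rho_a^s>0$. The bootstrap delta method (a first-order Taylor expansion with a remainder uniform on this ball) then gives
\[
\sqrt{n_a}\,e_a^*=\widehat A_{a,n_a}\sqrt{n_a}(\bar Y_{a,n_a}^*-\bar Y_{a,n_a})+o_{p^*}(1).
\]
Since $\widehat A_{a,n_a}\to A_a$ almost surely by Proposition~\ref{prop:clt}, the vectors $\sqrt{n_a}e_a^*$ converge conditionally to independent $\mathcal N_{4q}(0,\Sigma_{a,\mu})$ columns. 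This holds for any $Q_a$, because centering at the observed means mimics $H_0$ automatically; the null hypothesis is needed only so that the bootstrap limit equals the true limit of the observed statistic.

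\emph{Direct statistic.} The identity from the proof of Theorem~\ref{thm:direct} holds verbatim with $\epsilon_{a,n}^*=\sqrt{n_a}e_a^*$: we have $Q_n^{\OPS,*}=\epsilon_n^{*\top}\mathcal L_n\epsilon_n^*$, and $\mathcal L_n\to\mathcal L$ deterministically. The conditional continuous mapping theorem then gives the limit law in \eqref{eq:directlaw}.

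\emph{Fixed-observed-covariance statistic.} Here $D_n\to B_0$ and $W_{a,n}/n\to\lambda_a\Sigma_{a,E}^{-1}$ in probability by Proposition~\ref{prop:clt} and positive definiteness. Hence $\sqrt n\,e_{a,T}^*$ converges conditionally to $\mathcal N(0,\Sigma_{a,E}/\lambda_a)$, using $B_0^\top\Sigma_{a,\mu}B_0=G_a(\mu_E)=\Sigma_{a,E}$. The statistic is the minimized form $\min_c(Z_n^*-Jc)^\top\widehat V_n^{-1}(Z_n^*-Jc)$, which is a continuous function of $(Z_n^*,\widehat V_n)$ near $V$. Whitening and the rank-$3q(g-1)$ projection argument of Theorem~\ref{thm:wald} therefore give $\chi^2_{3q(g-1)}$.

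\emph{Studentized statistic.} I additionally need the recomputed $G_{a,n}^*$ to converge conditionally to $\Sigma_{a,E}$. This follows because bootstrap second moments converge to the sample second moments, together with the consistency of the bootstrap means and of the recomputed reference $v_n^*$. Slutsky's lemma applied conditionally then finishes the argument.

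\emph{Level of the test.} The limit laws are continuous: for the direct statistic this uses $\tr(\mathcal L\Omega)>0$, so some weight $\omega_h$ is positive. By Pólya's theorem, conditional convergence in distribution becomes uniform convergence of distribution functions, so the bootstrap quantiles converge in probability to the limiting quantiles. Combined with Theorems~\ref{thm:wald} and~\ref{thm:direct}, this gives asymptotic level $\alpha$.

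The main obstacle is the conditional delta-method step. I need a rigorous ``in probability'' version of the conditional convergence, which means controlling the Taylor remainder of the blockwise normalization uniformly on a neighborhood and handling the event where a bootstrap resultant vanishes. I would first try a direct uniform-remainder bound on the ball $\{\norm{x^s-\mu_a^s}\le\rho_a^s/2\}$, together with the bound $P^*(\text{leave ball})\to0$ in probability from a conditional Chebyshev inequality.
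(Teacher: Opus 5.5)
Your proposal is correct and follows the paper's own proof essentially step for step. The shared steps are:
\begin{itemize}
\item the conditional CLT for the groupwise ambient bootstrap means;
\item the conditional delta method with derivative $\widehat A_{a,n_a}$;
\item conditional independence across groups;
\item the exact weighted-centering identity plus continuous mapping for $Q_n^{\OPS,*}$;
\item consistent fixed or resampled covariance estimates feeding the same whitened fitted quadratic form as in Theorem~\ref{thm:wald};
\item continuity of the limit law at its upper quantile.
\end{itemize}
Your added details only make explicit what the paper's proof compresses: the uniform Taylor remainder on a ball with a conditional Chebyshev bound, P\'olya's theorem for quantile convergence, and the observation that $\tr(\mathcal L\Omega)>0$ forces some weight $\omega_h$ to be positive.
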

\begin{proof}
The centered groupwise empirical bootstrap has the conditional multivariate CLT for ambient sample means. Positive population resultants ensure that the componentwise normalization is smooth in neighborhoods containing the observed and resampled means with probability tending to one. The conditional delta method gives $\sqrt{n_a}\,e_a^*=\widehat A_{a,n_a}\sqrt{n_a}(\bar Y_{a,n_a}^*-\bar Y_{a,n_a})+o_{p^*}(1)$, with covariance limits $\Sigma_{a,\mu}$. Conditional independence across groups gives the same joint Gaussian limit as under $H_0$. Weighted centering and the continuous mapping theorem yield the direct limit. Under the stronger positive-definiteness assumptions, consistent group covariance estimates and the common tangent frame yield the same whitened fitted quadratic form as in Theorem~\ref{thm:wald}; replacing resampled covariance estimates by consistent fixed observed estimates does not alter this first-order limit. Continuity of the nondegenerate limiting distribution at its upper quantile completes the argument.
\end{proof}

\begin{remark}[Finite-sample scope]
These bootstrap statements are first-order asymptotic. In a small group, resampling may produce singular empirical covariance matrices: in the facial data of Section~\ref{sec:faces}, recomputing the $3\times3$ covariance estimates within resamples gives a singular matrix in some group for $14{,}329$ of $20{,}000$ resamples, exactly those in which a group draws at most three distinct reconstructions. Keeping the original positive-definite covariance estimates fixed avoids new inversions, but does not create exact finite-sample type-I-error control. An ordinary label-permutation test would require an exchangeability hypothesis stronger than equality of population extrinsic means \cite{ChRo:2013}. No covariance regularization or pseudoinverse is used here.
\end{remark}

\section{Five-Group Facial Application}\label{sec:faces}
\subsection{Data, Registration, and Test Design}
We use the data described in \cite[Section~12.5]{PaOs}: $31$ separate reconstructions of seven 3D facial landmarks from five subjects. The ordered subject identifiers are M006, M007, M008, M004, and M005, with sizes $(6,6,7,6,6)$ (Table~\ref{tab:subjects}). One complete reconstruction is the sampling unit. The statistical model assumes independent groups and independent, identically distributed reconstructions within each group.

\begin{figure}[H]
\centering
\includegraphics[width=\textwidth]{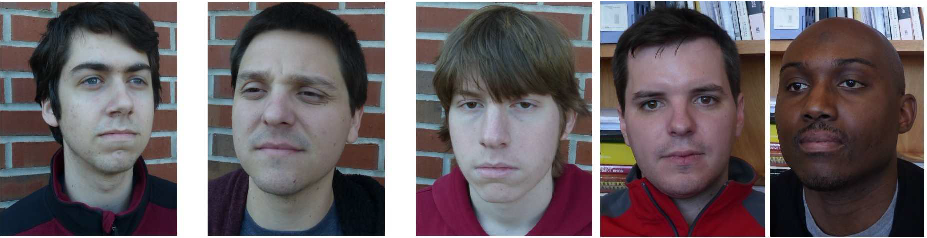}\\[5pt]
\begin{tabular}{@{}ccccc@{}}
\includegraphics[height=1.23in,keepaspectratio]{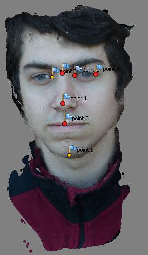} &
\includegraphics[height=1.23in,keepaspectratio]{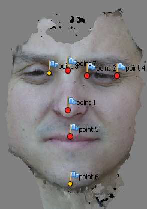} &
\includegraphics[height=1.23in,keepaspectratio]{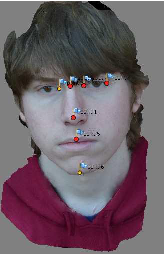} &
\includegraphics[height=1.23in,keepaspectratio]{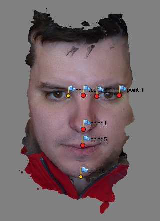} &
\includegraphics[height=1.23in,keepaspectratio]{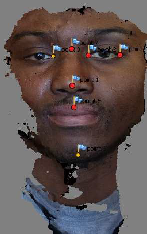}
\end{tabular}
\caption{Photographs (top) and seven-landmark facial reconstructions (bottom) of Subjects 1--5, in the order used for the analysis. The displayed reconstructions illustrate landmark placement; the calculations use the complete recorded configurations. Images from Patrangenaru and Osborne \cite[Section~12.5, Figures~12.1--12.2]{PaOs}.}
\label{fig:faces}
\end{figure}

The registered projective frame has basis landmarks $1,2,4,7$ and normalizing landmark $5$ (lip). The two retained coordinates are original landmark $L=3$ (right inner brow) and $L=6$ (chin). For Cartesian landmark positions $p_L\in\R^3$, put $x_L=(p_L^\top,1)^\top$, and in each reconstruction define
\begin{equation}\label{eq:face}
 U=(x_1\ x_2\ x_4\ x_7),\quad \boldsymbol{\alpha}=U^{-1}x_5,\quad C=U\diag(\boldsymbol{\alpha}),\quad
 Y^{(L)}=\frac{C^{-1}x_L}{\norm{C^{-1}x_L}},\quad L\in\{3,6\}.
\end{equation}
All 31 observed frames have $\det U>0$ and the constant \emph{mixed} coefficient signature $(+,+,-,-)$. Among the $105$ ways to choose four basis landmarks and a unit landmark from the seven recorded landmarks, none satisfies the all-positive coefficient condition in all $31$ reconstructions; the largest coverage is $28$. Thus no frame formed from the recorded landmarks meets this condition for the full sample. The application therefore uses Proposition~\ref{prop:fixed}, which covers all $31$ reconstructions under the observed signature, rather than the all-positive frame condition from \cite{ChZhHePa}. No observation is reflected to align it with a group mean. The validity of this signed-coordinate interpretation is conditional on the fixed lift and frame-domain model.

The aim is to test equality of the five population OPS mean directions \emph{at each landmark separately}: for $L\in\{3,6\}$, $H_{0,L}:\mu_{1,E}^{(L)}=\cdots=\mu_{5,E}^{(L)}$. Here $q=1$, the single factor is $Y^{(L)}$, the tangent dimension is three, and five group-specific $3\times3$ covariance estimates are required. The joint two-landmark model has $q=2$ and tangent dimension six. By \eqref{eq:rank}, each of the four groups with six reconstructions has joint sample covariance rank at most five; hence the inverse-based six-dimensional joint weighted test is unavailable without further methodology. We therefore use two \emph{marginal} tests and control their familywise error by Bonferroni with per-landmark level $0.025$.

\begin{table}[t]
\centering
\caption{Facial reconstruction groups and their correspondence to original data identifiers. Each observation comprises seven labeled 3D landmarks.}
\label{tab:subjects}
\begin{tabular}{lccccc}
\toprule
Subject & 1 & 2 & 3 & 4 & 5\\
\midrule
Data identifier & M006 & M007 & M008 & M004 & M005\\
Reconstructions $n_a$ & 6 & 6 & 7 & 6 & 6\\
\bottomrule
\end{tabular}
\end{table}

\subsection{Nonparametric Bootstrap Implementation and Results}
For both landmarks, the observed group extrinsic sample means, covariance estimates, differentials, precision weights, and pooled tangent bases are computed from the 31 original configurations. In the nonparametric bootstrap, complete configurations are resampled with replacement independently within each subject, using identical resampled reconstruction indices for the two landmarks and for the OPS and PS calculations. Only the group extrinsic sample means are recomputed in \eqref{eq:fixedboot}; the observed covariance estimates and tangent bases remain fixed. We use $20{,}000$ resamples and pseudorandom seed $42$. The empirical $97.5\%$ bootstrap quantile gives the marginal Bonferroni critical value; the Monte Carlo upper-tail estimate is $(1+m)/(20{,}001)$ \cite{PhSm:2010}, where $m$ counts bootstrap statistics not smaller than the observed statistic. This add-one formula is a conservative Monte Carlo reporting convention; the justification of the bootstrap itself is the first-order result of Proposition~\ref{prop:boot}.

The parallel VW PS calculation tests equality of population VW mean axes at each landmark. For group $a$, let $K_{a,n_a}=n_a^{-1}\sum_iY_{a,i}Y_{a,i}^\top$ and let $\widehat P_{a,n_a}$ be its leading rank-one eigenprojector. The pooled reference is the leading eigenprojector $\widehat P_n=v_{\mathrm{PS},n}v_{\mathrm{PS},n}^\top$ of $\sum_a(n_a/n)\widehat P_{a,n_a}$, with tangent basis $D_{\mathrm{PS},n}=B(v_{\mathrm{PS},n})$ and Frobenius tangent coordinates $\psi(H)=\sqrt2\,D_{\mathrm{PS},n}^\top Hv_{\mathrm{PS},n}$ for symmetric $H$. The discrepancies are $\psi(\widehat P_{a,n_a}-\widehat P_n)$. Each group covariance estimate is the average outer product of the vectors $\psi\bigl(\widehat R_a^{\VW}H_i\widehat P_{a,n_a}+\widehat P_{a,n_a}H_i\widehat R_a^{\VW}\bigr)$ with $H_i=Y_{a,i}Y_{a,i}^\top-K_{a,n_a}$, where $\widehat R_a^{\VW}$ is the reduced resolvent formed from the eigenvectors and eigenvalue gaps of $K_{a,n_a}$. The statistic $T_n^{\mathrm{PS}}$ is \eqref{eq:weighted} with these quantities, and its fixed-covariance bootstrap recomputes only the group projectors, with errors $\psi(\widehat P_{a,n_a}^*-\widehat P_{a,n_a})$.

\begin{table}[t]
\centering
\caption{Five-group marginal tests of equality of population extrinsic means. OPS and VW PS concern different mean parameters. The nonparametric bootstrap uses fixed observed covariance weights, with $20{,}000$ resamples and Bonferroni level $0.025$ for each landmark.}
\label{tab:tests}
\begin{tabular}{llrrr}
\toprule
Method & Original landmark & Statistic & 97.5\% cutoff & Bootstrap $p$-value\\
\midrule
OPS & 3 (right inner brow) & 141.47 & 23.02 & 0.000050\\
OPS & 6 (chin) & 947.87 & 22.70 & 0.000050\\
\addlinespace
VW PS & 3 (right inner brow) & 141.44 & 22.98 & 0.000050\\
VW PS & 6 (chin) & 966.25 & 22.74 & 0.000050\\
\bottomrule
\end{tabular}
\end{table}

\begin{table}[t]
\centering
\caption{Diagnostics for the observed marginal $3\times3$ extrinsic tangent covariance estimates. The bounds are rounded outward and do not assert nonsingularity in every bootstrap resample.}
\label{tab:diagnostics}
\begin{tabular}{lccc}
\toprule
Method & Matrices & \shortstack{Minimum eigenvalue\\(lower bound)} & \shortstack{Maximum condition no.\\(upper bound)}\\
\midrule
OPS & 10 & $1.44\times 10^{-6}$ & $1.87\times10^3$\\
VW PS & 10 & $2.89\times 10^{-6}$ & $1.87\times10^3$\\
\bottomrule
\end{tabular}
\end{table}

All ten observed OPS group-by-landmark $3\times3$ tangent covariance matrices have full rank; Table~\ref{tab:diagnostics} reports the positive-definiteness bounds for both constructions. The smallest group mean resultant length is $0.9958$, and the smallest leading eigenvalue gap of the $K_{a,n_a}$ is $0.9837$. Table~\ref{tab:tests} reports the observed statistics and fixed-observed-covariance bootstrap results; none of the $20{,}000$ statistics attains or exceeds its corresponding observed value in any of the four calculations. Thus the reported $p$-value $1/20{,}001$ expresses finite simulation resolution, not a zero tail probability.

We reject the null hypothesis of equality of the five population extrinsic OPS mean directions separately at the right inner brow and at the chin under the stated testing model. The parallel VW PS calculations use the projector $u\mapsto uu^\top$ and test equality of population \emph{axes}, not directed OPS means. They are a descriptive methodological comparison and reproduce the same marginal decisions. The OPS two-landmark family and the PS two-landmark family each use their own Bonferroni threshold $0.025$; no four-test joint familywise adjustment is asserted. The similarity of test statistics does not prove that OPS and PS are equivalent or that one procedure has greater general power. The omnibus conclusions do not identify which pairs of subjects have different population means.

\subsection{Descriptive Changes Between Estimated Means}\label{sec:changes}
The quaternion group also permits a descriptive change between two group extrinsic mean estimates. For subject groups $a\ne b$ and landmark $L$, define
\begin{equation}\label{eq:change}
 \widehat c_{ab}^{(L)}=(\bar Y_{a,E,n_a}^{(L)})^{-1}\odot\bar Y_{b,E,n_b}^{(L)}\in\Ss^3.
\end{equation}
This is a change \emph{between estimated means}, not the mean of paired within-subject changes, and it is distinct from the fitted tangent coordinate $\widehat c_n$ of \eqref{eq:weighted}. Writing $\widehat c_{ab}^{(L)}=(\xi,t)$ in the scalar-last convention, its associated principal $SO(3)$ rotation angle is $\widehat\vartheta=2\arccos|t|$. This angle is the same for a quaternion and its antipode, and therefore does not summarize the full signed OPS change; moreover, it is not a measured physical rotation of a face. For chin coordinates, the mean-to-mean changes from Subject~5 to Subjects~1--4 have associated angles $18.38^\circ$, $23.12^\circ$, $17.08^\circ$, and $18.09^\circ$, respectively. The six pairwise angles among Subjects~1--4 are all smaller than $6.4^\circ$, and all observed quaternion changes, at both landmarks, have positive real part. These numerical descriptions do not yield pairwise $p$-values or confidence regions.

\section{Discussion}
The product-of-spheres representation of OPS supplies a direct route from signed projective coordinates to extrinsic mean inference. The global quaternion tangent frame yields explicit covariance blocks, and the fitted generalized least-squares quadratic form has a chi-square null limit when the full population covariances are positive definite. The inverse-free discrepancy has a distinct nonpivotal limit and allows singular covariances, but its role should not be confused with the weighted facial tests reported here.

Three limitations should be noted. First, all OPS computations are relative to the stated projective frame, positive homogeneous lifts, and common fixed-sign domain. An arbitrary alternative projective frame can change the extrinsic metric and the test's numerical value. Second, independent reconstructions are the units assumed by the sampling model; the small group sizes limit the reliability of asymptotic and bootstrap approximations. Third, separate marginal rejections do not amount to a joint two-coordinate MANOVA test, and do not locate significant population pairs. Larger groups or a separately validated singular-covariance joint analysis would be required for that extension.

The principal contribution is a mathematically specified, reproducible framework for comparing population extrinsic OPS mean directions, with explicit treatment of coordinate signs, the full covariance structure, and nonparametric resampling conventions.

\section*{Acknowledgment, Data and Software}
Paige and Patrangenaru thank the National Science Foundation for awards NSF-DMS:2311058 and NSF-DMS:2311059, respectively.
The 31-reconstruction landmark file and the analysis program that produces the reported results are available at \url{https://socrates.stat.fsu.edu/~vic/thesis-computation-Musab}.

\end{document}